\newtheorem{theorem}{Theorem}
\newtheorem{lemma}[theorem]{Lemma}
\def\@endtheorem{\endtrivlist}
\newcounter{rules}
\newenvironment{Rule}{\refstepcounter{rules}\par\smallskip\noindent
\textbf{(\arabic{rules})}\quad}{\par} 
\newcommand{\currentrule}{\arabic{rules}}
\newenvironment{Ruleb}[1]{\par\smallskip\noindent
\textbf{(#1)}\quad}{}
\newcommand{\branchb}{\textsc{Branch}}
\newcommand{\vcb}{\textsc{VC}}
\newcommand{\vcbaseb}{\textsc{VCbase}}
\newcommand{\branchtb}{\textsc{Branch}\ensuremath{{}_3}}
\newcommand{\vctb}{\textsc{VC}\ensuremath{{}_3}}
\newcommand{\vctbaseb}{\textsc{VCbase}\ensuremath{{}_3}}
\newcommand{\branch}[3]{\branchb(#1,#2,\allowbreak\{#3\})}
\newcommand{\brancht}[3]{\branchtb(#1,#2,\allowbreak\{#3\})}
\newcommand{\vct}[2]{\vctb(#1,#2)}
\newcommand{\vctbase}[2]{\vctbaseb(#1,#2)}
\begin{document}

\title{Above guarantee parameterization for vertex cover on graphs with maximum degree 4}
\author{Dekel Tsur%
\thanks{Ben-Gurion University of the Negev.
Email: \texttt{dekelts@cs.bgu.ac.il}}}
\date{}
\maketitle

\begin{abstract}
In the vertex cover problem, the input is a graph $G$ and an integer $k$, and
the goal is to decide whether there is a set of vertices $S$ of size
at most $k$
such that every edge of $G$ is incident on at least one vertex in $S$.
We study the vertex cover problem on graphs with maximum degree 4 and
minimum degree at least 2, parameterized by $r = k-n/3$.
We give an algorithm for this problem whose running time is $O^*(1.6253^r)$.
As a corollary, we obtain an $O^*(1.2403^k)$-time algorithm for vertex
cover on graphs with maximum degree~4.
\end{abstract}

\paragraph{Keywords} graph algorithms, parameterized complexity.

\section{Introduction}
For an undirected graph $G$, a \emph{vertex cover} of $G$ is a set of vertices
$S$ such that every edge of $G$ is incident on at least one vertex in $S$.
In the parameterized \emph{vertex cover problem},
the input is a graph $G$ and an integer $k$
and the goal is to decide whether there is a vertex cover of $G$ of size
at most $k$.
The parameterized vertex cover problem has been studied extensively.
The first parameterized algorithm for vertex cover was given
in~\cite{buss1993nondeterminism}.
Improved algorithms were given in~\cite{downey1995parameterized, %
balasubramanian1998improved,downey1999parameterized,niedermeier1999upper, %
stege1999improved,chen2001vertex,chen2010improved, %
niedermeier2003efficient,chandran2004refined}.
The parameterized vertex cover problem was also studied on graphs with maximum
degree 3~\cite{chen2001vertex,chen2000improvement,chen2005labeled,razgon2009faster,xiao2010note}
and maximum degree 4~\cite{chen2001vertex,agrawal2014vertex}.


Consider the vertex problem on graphs with maximum degree $\Delta$.
It is easy to eliminate vertices with degree at most $1$ from the graph.
Thus, we can assume that the input graph has minimum degree at least $2$.
It is easy to show that the minimum size of a vertex cover in a graph with
maximum degree $\Delta$ and minimum degree at least $2$ is at least
$\frac{2}{2+\Delta}n$, where $n$ is the number of vertices
(cf.~\cite{xiao2010note}).
Therefore, it is more natural to use the ``above guarantee'' parameter
$r_\Delta = k-\frac{2}{2+\Delta}n$.
For $\Delta = 3$, Xiao~\cite{xiao2010note} gave an algorithm with
$O^*(1.6651^{r_3})$ running time.
An $O(c^{r_\Delta})$-time algorithm for vertex cover can also give an algorithm
for vertex cover parameterized by $k$.
Suppose that there is an exponential algorithm for vertex cover on graphs with
maximum degree $\Delta$ whose running time is $O^*(d^n)$.
Let $\alpha = 1/(2/(2+\Delta) + \log_c d)$.
Given an instance $(G,k)$ of vertex cover,
if $n > \alpha k$, run the parameterized algorithm on $(G,k)$ in
$O^*(c^{r_\Delta}) = O^*(d^{\alpha k})$ time,
and otherwise run the exponential algorithm on $G$ in
$O^*(d^n) = O^*(d^{\alpha k})$ time.
For graphs with maximum degree~3, combining the $O^*(1.6651^{r_3})$-time
algorithm of Xiao~\cite{xiao2010note} with the $O(1.0836^n)$-time algorithm
of Xiao and Nagamochi~\cite{xiao2013confining} gives an $O^*(1.1558^k)$-time
algorithm, which is faster than the previous algorithms given for this
problem~\cite{chen2001vertex,chen2000improvement,chen2005labeled,razgon2009faster}.

In this paper we give an algorithm for vertex cover on graphs with maximum
degree 4 and minimum degree at least 2 whose running time is
$O^*(1.6253^{r_4})$.
Combining our algorithm with the $O(1.1376^n)$-time algorithm of
Xiao and Nagamochi~\cite{xiao2017refined} gives an $O^*(1.2403^k)$-time
algorithm for vertex cover on graphs with maximum degree~4.
This algorithm is faster than the $O^*(1.2637^k)$-time algorithm of
Agrawal et al.~\cite{agrawal2014vertex}.

Our algorithm is similar to the algorithm of Xiao~\cite{xiao2010note}
for graphs with maximum degree~3.
Unlike the algorithm of Xiao that uses one branching rule, our algorithm
uses several branching rules.
The branching rules of our algorithm are based on the rules of the algorithm of
Chen et al.~\cite{chen2001vertex} with minor modifications in order obtain a faster algorithm.
Since the rules of our algorithm are based on the rules
of~\cite{chen2001vertex},
we omit the proof of correctness of these rules.

\section{Preliminaries}
For a graph $G$ and a set of vertices $S$, $G-S$ is the graph obtained from
$G$ by removing the vertices of $S$ and the edges incident on these vertices.
For a set of vertices $S$, $N(S) = (\bigcup_{v\in S} N(v))\setminus S$.

The \emph{merge} operation on a set of vertices $S$ in a graph $G$ generates a
graph $G'$ by deleting the vertices of $S$ and adding a new vertex $v^*$.
The vertex $v^*$ is adjacent to a vertex $u$ in $G'$ if and only if
there is an edge in $G$ between $u$ and a vertex in $S$.

\subsection{Reduction rules}
In this section we describe several reduction rules that are used by our
algorithm.

Rule~(D0) and Rule~(D1) below handle vertices with degree at most 1.
While we assume that the input graph has minimum degree 2,
graphs generated during the algorithm can contain vertices with degree 
at most~1 and these rules are used to eliminate such vertices.
\begin{Ruleb}{D0}
If there is a vertex $v$ with degree~0, delete $v$ from $G$.
\end{Ruleb}

\begin{Ruleb}{D1}
If there is a vertex $v$ with degree~1, delete the unique neighbor of $v$
from $G$ and decrease $k$ by $1$.
\end{Ruleb}


A \emph{general crown} in a graph $G$ is a pair $C,H$ of disjoint nonempty
sets of vertices such that the vertices in $C$ have degree~0 in $G-H$.
A crown $C,H$ is called \emph{good} if $G-C-H$ has minimum degree at least~2.
A \emph{proper crown} is a general crown $C,H$ such that
there is a matching between $C$ and $H$ of size $|H|$.
An \emph{almost crown} is a general crown $C,H$ such that
$|H|=|C|+1$ and $|N(S)| \geq |S|+1$ for every $\emptyset\neq S\subseteq C$.
We use the following reduction rules from~\cite{chen2010improved}.

\begin{Ruleb}{C1}
If $C,H$ is a proper crown, delete $C\cup H$ from $G$ and decrease $k$ by $|H|$.
\end{Ruleb}
\begin{Ruleb}{C2}
If $C,H$ is an almost crown and $H$ is not an independent set,
delete $C\cup H$ from $G$ and decrease $k$ by $|H|$.
\end{Ruleb}
\begin{Ruleb}{C3}
If $C,H$ is an almost crown and $H$ is an independent set,
merge the vertices in $C \cup H$ and decrease $k$ by $|H|-1$.
\end{Ruleb}

\begin{lemma}[Xiao~\cite{xiao2010note}]\label{lem:crown}
Let $G$ be a graph with minimum degree at least 2.
If $C,H$ is a general crown and $|C| \geq |H|-1$,
it is possible in polynomial time to either find a good proper crown $C',H'$,
or to conclude that $C,H$ is an almost crown.
\end{lemma}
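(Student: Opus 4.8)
The plan is to reason about the bipartite graph $B$ with parts $C$ and $H$ whose edge set consists of the edges of $G$ having one endpoint in $C$ and one in $H$. Because every vertex of $C$ has degree $0$ in $G-H$, the set $C$ is independent in $G$ and every $G$-edge leaving $C$ ends inside $H$; hence $N_G(C')=N_B(C')$ for every $C'\subseteq C$, so all neighbourhood conditions about subsets of $C$ can be read off $B$, and every step below reduces to bipartite matching computations and runs in polynomial time.

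First I would decide, in polynomial time, whether there is a nonempty $S\subseteq C$ with $|N_B(S)|\le |S|$. By the defect form of Hall's theorem this amounts to testing whether $B$ fails to have a matching saturating $C$, or whether $B-h$ fails to have one for some $h\in H$ --- at most $|H|+1$ bipartite matching computations, and a violating set, if it exists, can then be shrunk greedily to a minimal one. Suppose no such $S$ exists, i.e.\ $|N_B(S)|\ge |S|+1$ for every nonempty $S\subseteq C$. Taking $S=C$ gives $|N_B(C)|\ge |C|+1$; but $N_B(C)\subseteq H$ and, by hypothesis, $|H|\le |C|+1$, so $|H|=|C|+1$ and $N_B(C)=H$. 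Thus $C,H$ satisfies the definition of an almost crown, and this branch is done.

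Otherwise, pick a nonempty $S\subseteq C$ with $|N_B(S)|\le |S|$ of minimum cardinality, and put $C'=S$ and $H'=N_B(S)$, which is nonempty since $G$ has minimum degree at least $2$. Then $C',H'$ is a general crown, and the main point is that it is a \emph{proper} one, i.e.\ that $B[C'\cup H']$ has a matching saturating $H'$. By Hall's theorem it suffices to show $|N_B(T)\cap C'|\ge |T|$ for every nonempty $T\subseteq H'$. If this failed for some $T$, set $C''=C'\setminus N_B(T)$. Every vertex of $T\subseteq H'=N_B(C')$ has a neighbour in $C'$, so $N_B(T)\cap C'\neq\emptyset$ and $|C''|<|C'|$; moreover $|C''|=|C'|-|N_B(T)\cap C'|>|C'|-|T|\ge 0$ (using $|T|\le |H'|=|N_B(S)|\le|S|=|C'|$), so $C''$ is nonempty. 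Finally each vertex of $C''$ lies in $S$ and avoids $N_B(T)$, so all its neighbours lie in $H'\setminus T$, whence $|N_B(C'')|\le |H'|-|T|\le |C'|-|T|<|C''|$, contradicting the minimality of $S$. Hence $C',H'$ is a proper crown.

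It remains to make the proper crown $C',H'$ good. While $G-C'-H'$ has a vertex $v$ of degree at most $1$, enlarge the crown: $v\notin C'\cup H'$ and $v$ has no neighbour in $C'$ (such a neighbour would have all of its neighbours in $H'$, forcing $v\in H'$), so $v$ has at most one neighbour $w$ outside $C'\cup H'$ and all of its remaining neighbours inside $H'$; replace $C'$ by $C'\cup\{v\}$, and, if $w$ exists, $H'$ by $H'\cup\{w\}$. The result is again a general crown, it stays proper (extend the saturating matching by the edge $vw$; nothing to do if $w$ does not exist), and the defining property ``every $C$-side vertex has all its neighbours on the $H$-side'' is preserved, so the same reasoning applies in the next round. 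Since $|C'|+|H'|$ strictly increases at each step, after at most $n$ steps $G-C'-H'$ has minimum degree at least $2$ and $C',H'$ is the desired good proper crown. I expect the properness argument of the previous paragraph --- that the smallest surplus-violating subset of $C$ induces a perfectly matchable crown rather than merely a general one, with minimality ruling out a Hall violation on the opposite side --- to be the crux; the enlargement step and the polynomial-time bookkeeping are routine by comparison.
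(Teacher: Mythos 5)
The paper itself does not prove this lemma; it is imported verbatim from Xiao~\cite{xiao2010note}, whose argument extracts the crown directly from a maximum matching of the bipartite graph between $C$ and $H$ via alternating-path reachability sets. Your overall architecture is sound and your combinatorial core is correct: if no nonempty $S\subseteq C$ has $|N_B(S)|\le|S|$ then $C,H$ is an almost crown; an \emph{inclusion-minimal} such $S$ does yield a proper crown $(S,N_B(S))$ by exactly the Hall-violation argument you give; and the enlargement loop that absorbs low-degree vertices of $G-C'-H'$ correctly preserves both the crown property and properness.

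The gap is in how you obtain the minimal violator. You first say a violating set "can be shrunk greedily to a minimal one" and then invoke one "of minimum cardinality." Minimum cardinality is more than you need and is not obviously computable in polynomial time; more importantly, greedy single-vertex removal (delete $v$ while $S\setminus\{v\}$ remains violating) does \emph{not} produce an inclusion-minimal violator, and your properness proof genuinely uses inclusion-minimality (the contradiction set $C''=C'\setminus N_B(T)$ may be smaller than $C'$ by more than one element). Concretely, let $C=A\cup B$ and $H=X\cup Y$ with $|A|=|B|=3$, $|X|=2$, $|Y|=4$, $A$ completely joined to $X$ and $B$ completely joined to $Y$. All degrees are at least $2$, $|N(C)|=6=|C|$ so $C$ is violating, but no $C\setminus\{v\}$ is violating (each $N(C\setminus\{v\})$ still equals $H$), so greedy is stuck at $C$; yet $(C,H)$ is not proper ($|N(Y)\cap C|=3<4$), while the true minimal violator is $A$, giving the good proper crown $(A,X)$. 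The fix is easy and worth stating: either shrink by testing whether $S\setminus\{v\}$ \emph{contains} a nonempty violator (the same matching test applied to $B[(S\setminus\{v\})\cup H]$), which does terminate at an inclusion-minimal set in polynomially many matching computations, or bypass minimality altogether by taking $S$ to be the set of $C$-vertices reachable by alternating paths from an unmatched vertex of a maximum matching, which hands you the saturating matching of $N(S)$ for free --- this is the route of the cited proof.
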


\section{The Algorithm}

In this section we give an algorithm for vertex cover parameterized by $r = r_4$
on graphs with maximum degree at most 4 and minimum degree at least 2.
We first describe an algorithm \vctbaseb\ for solving the vertex cover problem
on \emph{connected} graphs with maximum degree at most 3 and
minimum degree at least 2.
The algorithm maintains the following invariants on the current graph:
\begin{inparaenum}[(1)]
\item
The minimum degree is at least 2.
\item
The maximum degree is at most 3.
\item
Except for the initial instance, every connected component is not 3-regular.
\end{inparaenum}

Before describing algorithm \vctbaseb, we give a procedure \branchtb.
The input to the procedure is an instance $(G,k)$ of vertex cover and
sets of vertices $S_1,\ldots,S_t$.
The goal of the procedure is to performs branching on the instances
$(G-S_1,k-|S_1|),\allowbreak\ldots,\allowbreak(G-S_t,k-|S_t|)$, after applying Rule~(D0)
and Rule~(D1) on each instance.
However, it is possible that due to many applications of Rule~(D0),
the decrease in $r$ in at least one branch will be too small.
In this case, the procedure performs a suitable reduction rule, or
make a recursive call. When procedure \branchtb\ makes a recursive call,
it marks one of the sets.
Procedure \branchtb\ performs the following steps.
\begin{enumerate}
\item\label{branch:apply-D0-D1}
For $i=1,\ldots,t$,
repeatedly apply Rule~(D0) and Rule~(D1) on the instance $(G-S_i,k-|S_i|)$.
Let $(G_i,k_i)$ be resulting instance.
Let $C_i$ be a set containing all the vertices that were
deleted by applications of Rule~(D0).
Let $H_i$ be a set containing $S_i$ and all the vertices that were
deleted by applications of Rule~(D1)
(note that $(G_i,k_i) = (G-C_i-H_i,k-|H_i|)$).
\item\label{branch:reduce}
If there is $i$ such that $S_i$ is not marked, $C_i \neq \emptyset$,
$H_i \neq \emptyset$, and $|C_i| \geq |H_i|-1$:
\begin{enumerate}
\item\label{branch:crown}
Apply Lemma~\ref{lem:crown} on the
general crown $C_i,H_i$ and either obtain a good proper crown $C,H$
in $G$, or conclude that $C_i,H_i$ is an almost crown in $G$.
\item\label{branch:C1}
If $C,H$ is a proper crown, apply Rule~(C1) on $(G,k)$ and $C,H$,
and let $(G',k')$ be the resulting instance.
Return $\vctbase{G'}{k'}$.
\item\label{branch:C2}
If $H_i$ is not an independent set, apply Rule~(C2) on $(G,k)$ and $C_i,H_i$,
and let $(G',k')$ be the resulting instance.
Return $\vctbase{G'}{k'}$.
\item\label{branch:C3}
If $|N(H_i)\setminus C_i| \leq 2$, apply Rule~(C3) on $(G,k)$ and $C_i,H_i$,
and let $(G',k')$ be the resulting instance.
Return $\vctbase{G'}{k'}$.
\item\label{branch:recursion}
Otherwise, call $\brancht{G}{k}{H_i,N(H_i)}$, where $H_i$ is marked.

\end{enumerate}
\item\label{branch:branch}
Return $\vctbase{G_1}{k_1} \lor \cdots \lor \vctbase{G_t}{k_t}$.
\end{enumerate}
Note that the graph $G'$ in line~\ref{branch:C1}, \ref{branch:C2},
or~\ref{branch:C3} has minimum degree at least~2 (as $C,H$ and $C_i,H_i$ are
good crowns).

\begin{lemma}\label{lem:branch-C123}
If line~\ref{branch:C1}, \ref{branch:C2}, or~\ref{branch:C3} of procedure
\branchtb\ is executed,
the value of $r$ does not increase when moving from the
instance $(G,k)$ to the instance $(G',k')$.
\end{lemma}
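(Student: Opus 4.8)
The plan is to treat the three lines separately. In each case I write down the instance $(G',k')$ produced by the corresponding reduction rule, express $r' = k' - n'/3$ in terms of $r = k - n/3$ (where $n,n'$ are the numbers of vertices of $G,G'$), and check that $r' - r \le 0$. All three rules are local, so each reduces to a one-line arithmetic estimate; only line~\ref{branch:C1} needs an extra combinatorial observation.

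Lines~\ref{branch:C2} and~\ref{branch:C3} concern an almost crown $C_i,H_i$, for which $|H_i| = |C_i|+1$. Rule~(C2) deletes $C_i\cup H_i$ and decreases $k$ by $|H_i|$, giving $n' = n - |C_i| - |H_i|$ and $k' = k - |H_i|$, hence $r' - r = \tfrac13|C_i| - \tfrac23|H_i| = \tfrac13(|H_i|-1) - \tfrac23|H_i| = -\tfrac13|H_i| - \tfrac13 < 0$. Rule~(C3) instead merges $C_i\cup H_i$ into one vertex and decreases $k$ by $|H_i|-1$, so $n' = n - |C_i| - |H_i| + 1$ and $k' = k - |H_i| + 1$, and $r' - r = \tfrac13|C_i| - \tfrac23|H_i| + \tfrac23 = -\tfrac13(|H_i|-1) \le 0$, using $|H_i| \ge 1$.

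Line~\ref{branch:C1} applies Rule~(C1) to a good proper crown $C,H$ of $G$ (produced by Lemma~\ref{lem:crown}): the set $C\cup H$ is deleted and $k$ is decreased by $|H|$, so as in the (C2) computation $r' - r = \tfrac13(|C| - 2|H|)$. The definition of a proper crown only gives $|C|\ge|H|$, so to conclude I use the invariants maintained by \vctbaseb, namely that $G$ has minimum degree at least $2$ and maximum degree at most $3$. By the definition of a crown every vertex of $C$ has all of its neighbors in $H$ (and none in $C$), so the number of edges of $G$ with an endpoint in $C$ equals $\sum_{v\in C}\deg_G(v) \ge 2|C|$; every such edge has its other endpoint in $H$, so this number is at most $\sum_{v\in H}\deg_G(v) \le 3|H|$. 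Hence $|C| \le \tfrac32|H|$, and therefore $r' - r \le \tfrac13\bigl(\tfrac32|H| - 2|H|\bigr) = -\tfrac16|H| \le 0$.

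The only real point to get right is the bound $|C|\le\tfrac32|H|$ in line~\ref{branch:C1}: one has to invoke the degree bounds on $G$ rather than argue from the crown structure alone, and use that every edge leaving $C$ lands in $H$. Everything else is routine bookkeeping with $r = k - n/3$.
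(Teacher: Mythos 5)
Your proof is correct and follows essentially the same route as the paper: for each of the three lines it expresses the change in $r=k-n/3$ in terms of $|C|$ and $|H|$, and for line~\ref{branch:C1} it bounds the number of edges between $C$ and $H$ below by $2|C|$ (minimum degree $2$) and above by a constant times $|H|$ (maximum degree). The only difference is that you use the degree-$3$ bound $|C|\le\tfrac32|H|$, whereas the paper deliberately uses the weaker bound $|C|\le 2|H|$ coming from maximum degree $4$ so that the identical argument also covers procedure \branchb\ on degree-$4$ graphs; either constant suffices for the stated claim.
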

\begin{proof}
Suppose first that line~\ref{branch:C1} is executed.
Since $G$ has minimum degree at least~2, the number of edges between
$C$ and $H$ is at least $2|C|$.
Each vertex in $H$ has degree at most 4 (we note that we used 4 and not 3 since
we will need this lemma also for the algorithm on graphs with maximum degree~4),
and therefore
the number of edges between $C$ and $H$ is at most $4|H|$.
It follows that $2|C|\leq 4|H|$.
By definition,
$k' = k - |H|$ and $|V(G')| = |V(G)|-|C|-|H|$.
Therefore, the value of $r$ decreases by
$|H|-(|C|+|H|)/3 = (2|H|-|C|)/3 \geq 0$.

Now suppose that line~\ref{branch:C2} is executed.
By definition, $k' = k - |H_i|$ and
$|V(G')| = |V(G)|-(|C_i|+|H_i|) = |V(G)|-(2|H_i|-1)$.
Therefore, the value of $r$ decreases by
$|H_i|-(2|H_i|-1)/3 = (|H_i|+1)/3 \geq 1$.
%
Finally, if line~\ref{branch:C3} is executed,
$k' = k - (|H_i|-1)$ and
$|V(G')| = |V(G)|-(|C_i|+|H_i|-1) = |V(G)|-2(|H_i|-1)$.
Therefore, the value of $r$ decreases by
$|H_i|-1-2(|H_i|-1)/3 = (|H_i|-1)/3 \geq \frac{1}{3}$.
\end{proof}

\begin{lemma}\label{lem:r-branch}
Suppose that line~\ref{branch:branch} of procedure \branchtb\ is executed.
For every $i$ such that $S_i$ is not marked,
when moving from the instance $(G,k)$ to the instance $(G_i,k_i)$,
the value of $r$ decreases by at least $(|H_i|+2)/3$ if $|H_i| \geq 2$,
decreases by $\frac{2}{3}$ if $|H_i|=1$, and does not change if $|H_i| = 0$.
\end{lemma}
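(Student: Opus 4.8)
The plan is to analyze how $r = k - n/3$ changes when we pass from $(G,k)$ to $(G_i, k_i) = (G - C_i - H_i, k - |H_i|)$, using the bookkeeping established in Step~\ref{branch:apply-D0-D1} of the procedure together with the degree constraints. Recall that $H_i$ consists of $S_i$ together with the vertices removed by applications of Rule~(D1), while $C_i$ consists of the vertices removed by Rule~(D0); and we are in the case where the reduction in Step~\ref{branch:reduce} did \emph{not} trigger, so either $|C_i| \le |H_i| - 2$, or $C_i = \emptyset$, or $H_i = \emptyset$. The change in $r$ is $\Delta r = |H_i| - (|C_i| + |H_i|)/3 = (2|H_i| - |C_i|)/3$. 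So everything reduces to an upper bound on $|C_i|$ in terms of $|H_i|$.

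The case $|H_i| = 0$ forces $C_i = \emptyset$ (no D0 or D1 application ever occurs, since $G - S_i = G$ has minimum degree $\ge 2$ when $S_i = \emptyset$... more precisely $S_i = \emptyset$ and no vertex of degree $\le 1$ is created), so $\Delta r = 0$; this is immediate. The case $|H_i| = 1$ means $S_i$ is a single vertex and no D1 application occurred: removing one vertex of degree $\le 4$ can create at most... here I need to be careful — I want to show $|C_i| \le 1$ so that $\Delta r \ge (2 - 1)/3 = 1/3$, but the claim asserts $\Delta r \ge 2/3$, i.e. $|C_i| = 0$. The key observation is that if $|H_i| = 1$ and $|C_i| \ge 1$ then $|C_i| \ge |H_i| - 1$, so the condition of Step~\ref{branch:reduce} would be met (provided $S_i$ is unmarked, which it is by hypothesis, and $H_i \ne \emptyset$, which holds) — hence the reduction would have fired and line~\ref{branch:branch} would not be reached for this configuration. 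Therefore $C_i = \emptyset$ and $\Delta r = |H_i| = 1 \ge 2/3$. Wait — I should double check: the claim says "decreases by $\tfrac23$", not by $1$; presumably the $2/3$ is a clean lower bound that suffices, and indeed $1 \ge 2/3$. I will state it as $\Delta r = 1$ which certainly implies the claimed bound, or simply note $\Delta r \ge 2/3$.

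For $|H_i| \ge 2$: again, since line~\ref{branch:branch} is reached and $S_i$ is unmarked, Step~\ref{branch:reduce} did not apply to index $i$; as $H_i \ne \emptyset$, we must have either $C_i = \emptyset$ or $|C_i| < |H_i| - 1$, i.e. $|C_i| \le |H_i| - 2$. In either case $|C_i| \le |H_i| - 2$. Plugging in, $\Delta r = (2|H_i| - |C_i|)/3 \ge (2|H_i| - (|H_i| - 2))/3 = (|H_i| + 2)/3$, which is exactly the claimed bound. This is the heart of the argument, and it is short; I expect the only subtlety — the step I would double-check most carefully — is verifying that the guard conditions in Step~\ref{branch:reduce} ("$S_i$ is not marked, $C_i \ne \emptyset$, $H_i \ne \emptyset$, and $|C_i| \ge |H_i| - 1$") are exactly the negation of what we may assume here, so that "line~\ref{branch:branch} is executed and $S_i$ is not marked" genuinely forces $\neg(C_i \ne \emptyset \wedge |C_i| \ge |H_i| - 1)$ whenever $H_i \ne \emptyset$. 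Once that is pinned down, the three cases close by the arithmetic above.

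\begin{proof}
Fix $i$ with $S_i$ not marked, and suppose line~\ref{branch:branch} is executed, so that none of lines~\ref{branch:C1}--\ref{branch:recursion} was executed for this value of $i$. By the bookkeeping in Step~\ref{branch:apply-D0-D1}, $(G_i,k_i) = (G - C_i - H_i,\, k - |H_i|)$, where $C_i$ is the set of vertices removed by applications of Rule~(D0) and $H_i \supseteq S_i$ is the set of vertices removed by Rule~(D1) together with $S_i$. Hence $|V(G_i)| = |V(G)| - |C_i| - |H_i|$ and $k_i = k - |H_i|$, so the value of $r = k - |V(G)|/3$ changes by
\[
\Delta r \;=\; |H_i| - \frac{|C_i| + |H_i|}{3} \;=\; \frac{2|H_i| - |C_i|}{3}.
\]

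If $|H_i| = 0$, then $S_i = \emptyset$ and no application of Rule~(D0) or Rule~(D1) occurred (the current graph $G$ has minimum degree at least $2$), so $C_i = \emptyset$ and $\Delta r = 0$.

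If $|H_i| = 1$, we claim $C_i = \emptyset$. Otherwise $C_i \neq \emptyset$, and since $S_i$ is not marked and $|C_i| \geq 1 = |H_i|$, the condition of Step~\ref{branch:reduce} holds for index $i$, so one of lines~\ref{branch:C1}--\ref{branch:recursion} is executed — contradicting the assumption that line~\ref{branch:branch} is reached. Hence $C_i = \emptyset$ and $\Delta r = |H_i| = 1 \geq \tfrac23$.

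If $|H_i| \geq 2$, we claim $|C_i| \leq |H_i| - 2$. If instead $|C_i| \geq |H_i| - 1$, then in particular $C_i \neq \emptyset$ (as $|H_i| - 1 \geq 1$); together with $H_i \neq \emptyset$ and $S_i$ not marked, the condition of Step~\ref{branch:reduce} holds for index $i$, and again one of lines~\ref{branch:C1}--\ref{branch:recursion} is executed, a contradiction. Therefore $|C_i| \leq |H_i| - 2$, and
\[
\Delta r \;=\; \frac{2|H_i| - |C_i|}{3} \;\geq\; \frac{2|H_i| - (|H_i| - 2)}{3} \;=\; \frac{|H_i| + 2}{3}. \Qed
\]
\end{proof}
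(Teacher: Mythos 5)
Your proof is correct and takes essentially the same route as the paper: compute $\Delta r = (2|H_i|-|C_i|)/3$ and bound $|C_i|$ in each case by observing that the guard of Step~\ref{branch:reduce} must have failed for every unmarked index once line~\ref{branch:branch} is reached. One small arithmetic slip in the $|H_i|=1$ case: with $C_i=\emptyset$ you get $\Delta r = (2\cdot 1 - 0)/3 = \tfrac{2}{3}$ (the value the paper states), not $\Delta r = |H_i| = 1$; the claimed bound of $\tfrac{2}{3}$ still holds, so this does not affect the conclusion.
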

\begin{proof}
Suppose that $|H_i| \geq 2$.
We have that $|C_i| \leq |H_i| - 2$ otherwise line~\ref{branch:branch} of the
algorithm would not be executed.
By definition, in the instance $(G_i,k_i)$ the value of $k$ decreases
by $|H_i|$ and the value of $n$ decreases by $|H_i|+|C_i|$.
Therefore, the value of $r$ decreases by $|H_i|-(|H_i|+|C_i|)/3 =
(2|H_i|-|C_i|)/3 \geq (|H_i|+2)/3$.

If $|H_i| = 1$ then $|C_i| = 0$ otherwise line~\ref{branch:branch} of the
algorithm would not be executed.
Therefore, the value of $r$ decreases by $1-1/3 = \frac{2}{3}$.
Finally, if $|H_i| = 0$ then $|C_i| = 0$, and the value of $r$ does not change.
\end{proof}

We now analyze a call to \branchtb\ in which $S_1$ is marked and 
line~\ref{branch:branch} is executed.
By definition, $S_1 = H'$ and $S_2 = N(H')$, where $C',H'$ is
an almost crown that was obtained during the parent call to \branchtb.
Note that since Rules~(D0) and~(D1) cannot be applied on $G-C'-H'$, it follows
that $H_1 = S_1 = H'$ and $C_1 = C'$.
When moving from $(G,k)$ to $(G_1,k_1)$, the value
of $k$ decreases by $|H_1|=|H'|$ and the value of $n$ decreases by
$|H_1|+|C_1| = 2|H'|-1$. Therefore, the value of $r$ decreases by
$|H'|-(2|H'|-1) = (|H'|+1)/3 \geq 1$.
By Lemma~\ref{lem:r-branch}, when moving from $(G,k)$ to $(G_2,k_2)$,
the value of $r$ decreases by at least $(|H_2|+2)/3 \geq (|H'|+4)/3 \geq 2$
($|H_2|\geq |S_2| = |N(H')| = |C'|+|N(H') \setminus C'| \geq |C'|+3 = |H'|+2$).
Therefore, the branching vector in this case is at least $(1,2)$ and
the branching number is at most 1.6181.

We now describe algorithm \vctbaseb. The algorithm applies
the first applicable rule from the following rules.
\begin{Rule}\label{rule:terminate-1}
If $r < 0$, return `no'.
\end{Rule}

\begin{Rule}\label{rule:terminate-2}
If $V(G) = \emptyset$ return `yes'.
\end{Rule}

\begin{Rule}\label{rule:fold}
If there is a vertex $v$ with degree~2 whose two neighbors are not adjacent
and $|N(N(v))\setminus \{v\}| \leq 2$,
apply Rule~(C3) on $(G,k)$ with $C = \{v\}$ and $H = N(v)$,
and let $(G',k')$ be the resulting instance.
Return $\brancht{G'}{k'}{\emptyset}$.
\end{Rule}

Note that if Rule~(\currentrule) is applied, when moving from $(G,k)$ to
$(G',k')$, the value of $r$ decreases by $|H|-(|C|+|H|)/3 = \frac{1}{3}$.

\begin{Rule}\label{rule:no-fold}
If there is a vertex $v$ with degree~2 whose two neighbors are not adjacent,
return $\brancht{G}{k}{N(v),N(N(v))}$.
\end{Rule}
We now analyze the branching number of Rule~(\currentrule).
We can assume that the call to \branchtb\ executes line~\ref{branch:branch}
(if line \ref{branch:C1}, \ref{branch:C2} or~\ref{branch:C3} is executed then
no branching is performed, and if line~\ref{branch:recursion} is executed,
we already showed that the branching number is at most 1.6181).
By Lemma~\ref{lem:r-branch}, when procedure \branchtb\ processes the set
$S_1 = N(v)$, the decrease in $r$ is at least
$(|H_1|+2)/3 \geq (|S_1|+2)/3 = \frac{4}{3}$.
Additionally, when procedure \branchtb\ processes the set $S_2 = N(N(v))$,
the decrease in $r$ is at least $(|H_2|+2)/3 \geq (|S_2|+2)/3 \geq 2$.
Therefore, the branching vector is at least $(\frac{4}{3},2)$, and the branching
number is at most 1.5248.

\begin{Rule}\label{rule:triangle}
If there is a vertex $v$ with degree~2, return $\brancht{G}{k}{N(v)}$.
\label{rule:2nd-last}
\end{Rule}

\begin{Rule}\label{rule:3-regular}
If $G$ is 3-regular, select an arbitrary vertex $v$ and return
$\brancht{G}{k}{\{v\},N(v)}$.
\end{Rule}

We do not analyze the branching number of Rule~(\currentrule) since this rule is applied
at most once and therefore does not affect the time complexity of the algorithm.
The running time of algorithm \vctbaseb\ is $O^*(1.6181^r)$.
In order to handle non-connected graphs with maximum degree at most~3,
we use an algorithm \vctb\ that performs the following steps.
\begin{enumerate}
\item
If $G$ is connected return $\vctbase{G}{k}$.
\item
Let $G'$ be a connected component of $G$.
\item\label{vc:kprime}
For $k' = \lceil |V(G')|/3\rceil,\ldots,k-\lceil (n-|V(G')|)/3\rceil$,
if $\vctbase{G'}{k'}$ returns `yes', return $\vct{G-V(G')}{k-k'}$.
\item
Return `no'.
\end{enumerate}
Note that in line~\ref{vc:kprime} we have $k' \leq k-(n-|V(G')|)/3$ and
therefore $k'-|V(G')|/3 \leq k-n/3 = r$.
Thus, the time complexity of all the calls to \vctbaseb\ in line~\ref{vc:kprime}
is $O^*(1.6181^r)$.
It follows that the running time of algorithm \vctb\ is $O^*(1.6181^r)$.

We now describe an algorithm \vcbaseb\ for connected graphs with maximum
degree~4 and minimum degree at least~2.
The algorithm maintains the following invariants on the current graph:
\begin{inparaenum}[(1)]
\item\label{inv:min-deg-2}
The minimum degree is at least 2.
\item\label{inv:max-deg-4}
The maximum degree is~4.
\item\label{inv:not-4-regular}
Except for the initial instance, every connected component is not 4-regular.
\end{inparaenum}
Let \branchb\ be a procedure identical to \branchtb\ except that the calls
to \vctbaseb\ are replaced with calls to \vcbaseb.
Algorithm \vcbaseb\ uses Rules~(1)--(\ref{rule:2nd-last}) above,
where the calls to \branchtb\ are replaced with calls to \branchb,
and the following rules.

\begin{Rule}\label{rule:degree-3}
If the maximum degree of $G$ is at most 3, return $\vct{G}{k}$.
\end{Rule}

\begin{Rule}\label{rule:4-regular}
If $G$ is 4-regular, select an arbitrary vertex $v$ and return
$\branch{G}{k}{\{v\},N(v)}$.
\end{Rule}

\begin{Rule}\label{rule:non-IS}
If $v$ is a vertex with degree~3 such that there is an edge between
two neighbors $u,w$ of $v$, return
$\branch{G}{k}{N(v),N(x)}$,
where $x$ is the third neighbor of $v$.
\end{Rule}

By Lemma~\ref{lem:r-branch}, the branching vector of Rule~(\currentrule) is
at least $(\frac{5}{3},\frac{5}{3})$ (since $|N(v)| =  3$ and $|N(x)|\geq 3$),
and the branching number is at most 1.5158.

\begin{Rule}\label{rule:common}
If there is a vertex $v$ with degree~3 and a vertex $t$ such that
$|N(t) \cap N(v)| \geq 2$, return $\branch{G}{k}{N(v),\{v,t\}}$.
\end{Rule}

By Lemma~\ref{lem:r-branch}, the branching vector of Rule~(\currentrule) is
at least $(\frac{5}{3},\frac{4}{3})$,
and the branching number is at most 1.5906.

For the following rules note that there is a connected component of $G$
that contains at least one vertex with degree~3 and at least one vertex with
degree~4.
Therefore, there is a vertex $v$ with degree~3 that is adjacent to a vertex
with degree~4.
Denote the neighbors of $v$ by $z,u,w$, where $\deg(z) = 4$.
Note that since Rule~(\ref{rule:common}) cannot be applied,
the sets $N(z) \setminus \{v\}$, $N(u) \setminus \{v\}$, and
$N(w) \setminus \{v\}$ are pairwise disjoint.

\begin{Rule}\label{rule:uw-4}
If at least one vertex from $u,w$ has degree~4,
return $\branch{G}{k}{N(v), \{z\} \cup N(u) \cup N(w), N(z)}$.
\end{Rule}

By Lemma~\ref{lem:r-branch}, the branching vector of Rule~(\currentrule) is
at least $(\frac{5}{3},3,2)$ (since $|N(v)|=3$,
$|\{z\} \cup N(u) \cup N(w)| \geq 7$, and $|N(z)| = 4$).
We can improve the bound on the branching vector as follows.
Consider the processing of $S_3 = N(z)$ in procedure \branchb.
We consider two cases.
For the first case, assume that Rule~(D1) is not applied in
line~\ref{branch:apply-D0-D1} of procedure \branchb.
We claim that in this case Rule~(D0) is applied only on $z$.
Suppose conversely that the rule is applied on a vertex $x \neq z$.
By definition, $N(x) \subseteq N(z)$.
$x$ cannot be $u$ or $v$ since the sets $N(z) \setminus \{v\}$,
$N(u) \setminus \{v\}$, and $N(w) \setminus \{v\}$ are pairwise disjoint.
Therefore, $N(x) \subseteq N(z) \setminus \{v\}$.
The graph $G$ has minimum degree~3, and therefore $N(x) = N(z) \setminus \{v\}$.
It follows that Rule~(\ref{rule:common}) can be applied on $x$, a contradiction.
Thus, Rule~(D0) is applied only on $z$.
It follows that $H_3 = N(z)$ and $C_3 = \{z\}$.
Therefore, the decrease in $r$ is $4 - 5/3 = \frac{7}{3}$.
If Rule~(D1) is applied at least once, $|H_3| \geq |S_3|+1 = 5$.
By Lemma~\ref{lem:r-branch} the decrease in $r$ is
at least $(|H_3|+2)/3 \geq \frac{7}{3}$.
We obtained that the branching vector is at least  $(\frac{5}{3},3,\frac{7}{3})$
and the branching number is at most 1.6253.

\begin{Rule}\label{rule:uw-3}
Otherwise ($\deg(u) = \deg(w) = 3$), return $\branch{G}{k}{\{z\},N(z)}$.
\end{Rule}

By Lemma~\ref{lem:r-branch}, the branching vector of Rule~(\currentrule) is
at least $(\frac{2}{3},2)$ (since $|N(z)| = 4$).
We can improve the bound on the branching vector as follows.
Consider the processing of $S_1 = \{z\}$ in procedure \branchb.
Since $G$ has minimum degree~3 and $|S_1|=1$, we have that $G-S_1$ has
minimum degree at least~2. Therefore, Rules~(D0) and~(D1) are not applied by
procedure \branchb, so $G_1 = G-\{z\}$ and $k_1 = k-1$.
In the instance $(G_1,k_1)$, $v$ has degree~2 and its two neighbors $u,w$
are not adjacent (since Rule~(\ref{rule:non-IS}) could not be applied on
$(G,k)$).
Moreover, $|N_{G_1}(\{u,w\})\setminus \{v\}| = 4$
(since $N(u) \setminus \{v\}$ and $N(w)\setminus \{v\}$ are disjoint, and these
sets do not contain $z$).
Therefore, in the instance $(G_1,k_1)$ Rule~(\ref{rule:no-fold}) can be applied
on $v$.
We change the algorithm to force that in the instance $(G_1,k_1)$,
Rule~(\ref{rule:fold}) will be applied on $v$ (even though the requirement
$|N(N(v))\setminus \{v\}| \leq 2$ is not satisfied).
The application of Rule~(\ref{rule:fold}) on $v$ decreases the value of $r$ by
$\frac{1}{3}$.
Taking this decrease into account,
the branching vector of Rule~(\currentrule) is at least
$(\frac{2}{3}+\frac{1}{3},2)$ and the branching number is at most 1.6181.

We now show that the algorithm maintains the three invariants defined above.
Suppose that the current input graph $G$ to \vcbaseb\ satisfies the invariants.
We need to show that the graphs generated from $G$ on which \vcbaseb\ is
recursively called also satisfy the invariants.
This is straightforward for all the rules of the algorithm except Rule~(\ref{rule:uw-3}).
Consider the application of Rule~(\ref{rule:uw-3}) on $G$
and the application of Rule~(\ref{rule:fold}) on $G_1 = G-\{z\}$,
and let $G_2$ be the resulting graph.
We have $\deg_{G_2}(v^*) = |N_{G_1}(\{u,w\})\setminus \{v\}| = 4$.
All the other vertices in $G_2$ have degrees between 2 and 4.
Therefore, $G_2$ satisfies Invariant~(\ref{inv:min-deg-2}) and
Invariant~(\ref{inv:max-deg-4}).
At least one of the two vertices in $N(u)\setminus \{v\}$ has degree 3
in $G$ (otherwise Rule~(\ref{rule:uw-4}) can be applied on $u$).
This vertex also has degree~3 in $G_2$ (since $N(z)\setminus\{v\}$ and
$N(u)\setminus\{v\}$, and $N(w)\setminus\{v\}$ are pairwise disjoint)
and therefore $G_2$ satisfies Invariant~(\ref{inv:not-4-regular}).

We obtain that algorithm \vcbaseb\ solves the vertex cover problem on
connected graphs with maximum degree 4 and minimum degree at least 2 in
$O^*(1.6253^r)$ time.
To handle non-connected graph, we use an algorithm \vcb\ which is analogous to
algorithm \vctb.

\bibliographystyle{abbrv}
\bibliography{vc4}

\end{document}